\newglossaryentry{fn}{type=symbols,name={$D(a,b)$},description={is the maximum clique on a $Two-sided$ $L-shape$ graph which includes $a,b$ and all other $L$s that are out side of the $a-b$ interval and have height smaller than minimum height of $a$ and $b$.}}
\newglossaryentry{fn2}{type=symbols,name={$D(c,d)$},description={d is the furthest $L-shape$ from $a$ and $b$}}
\newcommand\testletter[1]{   \raisebox{\depth}{\rotatebox{180}{\textsf{#1}}} 
}
    \newtheorem{theorem}{Theorem}
    \newtheorem{lemma}[theorem]{Lemma}
    	\definecolor{darkgreen}{rgb}{0.01, 0.93, 0.29}
\definecolor{lightbrown}{rgb}{0.91, 0.4, 0.11}
\title{Finding a Maximum Clique in a Grounded 1-Bend String Graph}
\author[1]{J. Mark Keil}
\author[1]{Debajyoti Mondal} 
\author[1]{Ehsan Moradi} 
\author[2]{Yakov Nekrich} 
\affil[1]{Department of Computer Science, University of Saskatchewan, Saskatoon, Canada\\
  \texttt{mark.keil@cs.usask.ca}, \texttt{dmondal@cs.usask.ca}, \texttt{e.moradi@usask.ca}}  
\affil[2]{Department of Computer Science, Michigan Technological Univeristy, Michigan, USA\\
  \texttt{yakov@mtu.edu}}
\begin{document}
\maketitle              
\begin{abstract}
A grounded 1-bend string graph is an intersection graph of a set of polygonal lines, each with one bend, such that the lines lie above a common horizontal line $\ell$ and have exactly one endpoint on $\ell$. We show that the problem of finding a  maximum clique in a grounded 1-bend string graph is APX-hard, even for strictly $y$-monotone strings. For general 1-bend strings, the problem remains APX-hard even if we restrict the position of the bends and end-points to lie on at most three parallel horizontal lines.  
 We give fast algorithms to compute a maximum clique  for different subclasses of grounded segment graphs, which are formed by restricting the strings to various forms of $L$-shapes.
\end{abstract}

\section{Introduction}\label{se:intro}

A \emph{geometric intersection graph} consists of a set of geometric objects  representing the nodes of the graph, where  two nodes are adjacent if and only if the corresponding objects intersect. Intersection graphs that arise from the intersection of \emph{strings}, i.e., simple curves in $\mathbb{R}^2$, are called  \emph{string graphs}.  
A number of restrictions on the strings have been examined in the literature. Outerstring graphs and grounded string graphs are two  widely studied classes of graphs that resulted from such restrictions. An \emph{outerstring graph} is a string graph, where the strings lie inside a disk, with one endpoint on the boundary of the disk. In a \emph{grounded string} graph, the strings lie above a common horizontal line $\ell$, with one  endpoint on $\ell$. The line $\ell$ is referred to as a \emph{ground line}. 

Although the outerstring graphs and grounded string graphs are the same for general strings, they can be different when we put restrictions on the strings. For example, if we restrict the strings to be straight line segments, the resulting grounded segment graph class is a proper subclass of outersegment graph class~\cite{DBLP:journals/jgaa/CardinalFMTV18}. Strings are often modeled  with polygonal chains, where a \emph{$k$-bend string} is a polygonal chain with at most $k$ bends or $(k+1)$ segments.

While the maximum independent set problem is NP-complete for string graphs (even when the strings are straight line segments), it is polynomial-time solvable for outerstring graphs with a given  outerstring representation of polynomial size~\cite{keil2017algorithm}. Therefore, it is natural to explore other common optimization problems for outerstring graphs. In this paper we explore the  \emph{maximum clique} problem, i.e., we seek a largest subset of  pairwise intersecting  strings. Cabello et al.~\cite{cabello2012clique} proved the maximum clique problem to be  NP-Hard even for ray intersection graphs. Since one can enclose   a ray intersection graph inside a circle such  that all the rays hit the perimeter, their result implies NP-hardness for computing a maximum clique in an outersegment graph. Therefore, an interesting question that arises in this context is whether the maximum clique problem remains NP-hard for grounded segment graphs. While the problem remains open, in this paper, we show NP-hardness for two subclasses of grounded 1-bend  string graphs.

\subsection{Related Research}

The hardness of independent set and maximum clique problems in general graphs inspired researchers to examine these problems for restricted intersection graph classes. Both problems remain polynomial-time  solvable for \emph{circle graphs}, i.e., the intersection of a set of chords of a circle~\cite{Nash10,tiskin2015fast}.
However, both become 
NP-complete in general segment intersection graphs~\cite{kratochvil1990independent}.  

A set of curves is \emph{$k$-intersecting} if every pair of curves have at most $k$ points in common. A string graph is $k$-intersecting if it is the intersection graph of a $k$-intersecting set of curves. Fox and Pach~\cite{FoxP11} gave subexponential time algorithms for computing a maximum independent set for string graphs, as well as algorithms for approximating independent set and maximum clique in $k$-intersecting graphs. 
 
Middendorf and Pfeiffer~\cite{MiddendorfP92} showed the maximum clique  problem to be NP-hard for axis-aligned 1-bend strings, even when the strings are of two types:    \testletter{L} and \textsf{L}. They also showed the problem to be polynomial-time solvable for two cases: (a) For the strings of type $\mathsf{\Gamma}$ and \testletter{L}, and (b) for grounded segments when the free endpoints of the segments lie on a fixed number of horizontal lines. 
  The recognition problem of the intersection graphs of \textsf{L}-shapes is NP-hard~\cite{thesis}.  Throughout the paper we  assume that an intersection representation is given.



Keil et al.~\cite{keil2017algorithm} examined the maximum independent set problem for outerstring graphs. Given an  outerstring representation, where each segment is represented as a polygonal chain, they showed how to compute a maximum independent set in $O(s^3)$ time, where $s$ is the total number of segments in the representation. Bose et al.~\cite{mondal2019} gave an $O(n^2)$-time algorithm when the strings are  
$y$-monotone polygonal paths of constant length with segments at integral coordinates. They also showed this to be the best possible under  strong exponential time hypothesis. 


A rich body of research examines the recognizability of various classes of string graphs~\cite{matousek2014intersection,DBLP:journals/jgaa/CardinalFMTV18,pergel2016edge}. Throughout this paper, whenever we examine an intersection model, we assume that the input graph comes with a   representation satisfying that intersection  model.


\subsection{Contributions}
We first prove that the problem of computing a maximum clique in a grounded 1-bend string graph is APX-hard, even  for strictly $y$-monotone strings. 
 We then show that the problem  remains NP-hard when the bend and end  points of the strings (not necessarily $y$-monotone) are restricted to lie on three horizontal lines.
 Finally, we give fast polynomial-time algorithms for some restricted grounded 1-bend string graphs. In particular, when the grounded 1-bend strings are  1- and 2-sided $L$-shapes. The results are summarized in the following table. Note that  the class of 2-sided grounded $L$-shapes is known to be a proper subclass of grounded segment graph class~\cite{Jelinek019}. Therefore, our results do not settle the time complexity question for computing a maximum clique in a grounded segment graph, and it remains open. 
 We also show the separation between various graph classes.



\begin{table}[ht]
\label{tab1}
\centering
\begin{tabular}{|l|c|c|}
\hline
Graph Class  &  Complexity & Reference\\\hline
\hline
Grounded 1-bend string graphs  & APX-hard  & Section~\ref{apx}\\\hline
Grounded 2-sided $L$-shape graphs & $O\left(\frac{n^2\log^2 n}{(\log \log n)^2}\right)$  & Section~\ref{2sl}\\\hline
Grounded 2-sided square-$L$ graphs  & $O(n^2\log \log  n)$ & Section~\ref{2ssl}\\\hline
Grounded 1-sided $L$-shape graphs  & $O(n^2\log \log n)$  & Section~\ref{1sl}\\\hline 
\end{tabular}
\end{table}

Throughout the paper, we use the term \emph{$L$-shape} to denote  an axis-aligned  1-bend string. An $L$-shape intersection representation is \emph{1-sided} (Figure~\ref{fig:fig1}(a)--(b)), if the $L$-shapes in the representation all turn clockwise,  or all turn anticlockwise. Otherwise, the representation is \emph{two-sided} (Figure~\ref{fig:fig1}(d)).  In a \emph{square $L$-shape representation}, the horizontal and vertical segments of every $L$-shape are of the same length  (Figure~\ref{fig:fig1}(c)). 

\begin{figure}[ht]
\centering
\includegraphics[width=\linewidth]{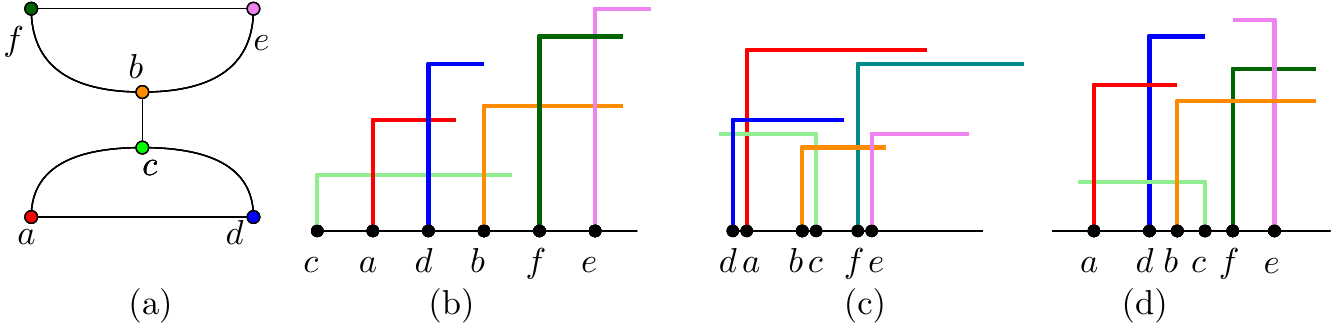}
\caption{(a) A graph $G$. (b) An 1-sided  $L$-shape representation. (c) A square $L$-shape representation. (d) A two-sided $L$-shape representation.}
\label{fig:fig1}
\end{figure}


\section{APX-hardness}
\label{apx}
In this section we show that finding a maximum clique in a grounded 1-bend string graph is an APX-hard problem, even when each string is strictly $y$-monotone. We reduce the maximum independent set  problem in $2k$-subdivisions of cubic graphs, which is APX-hard for any fixed $k\ge 0$~\cite{ChlebikC07}. Here a \emph{$t$-subdivision} of a graph is obtained by replacing each edge $(u,v)$ of $G$ with a path $(u,d_1,d_2,$ $\ldots,d_t,v)$ of $t$ division vertices.  

Let $G$ be 2-subdivision of a cubic graph, we first compute the complement graph $\overline{G}$ and then show how $G$ can be represented as a strictly $y$-monotone grounded 1-bend string graph. Assume that there exists a $(1-\varepsilon)$-approximation algorithm $\mathcal{A}$ for the maximum clique problem in  grounded 1-bend string graphs. Let $C$ be a maximum clique obtained from $\overline{G}$ using $\mathcal{A}$. Let $I^*$ and $C^*$ be a maximum independent set and a maximum clique in $G$ and $\overline{G}$, respectively. Then  $|C| \ge (1-\varepsilon)|C^*|$. Note that an  independent set in a graph corresponds to a clique in its complement, and vice versa. Therefore, we obtain   an independent set of size at least $(1-\varepsilon)|I^*|$ in $G$, which contradicts the APX-hardness for computing a maximum independent set in $G$.  Therefore, it now suffices to show that $\overline{G}$ admits a strictly $y$-monotone grounded 1-bend string representation.

\begin{figure*}[ht]
    \centering
    \includegraphics[width=\textwidth]{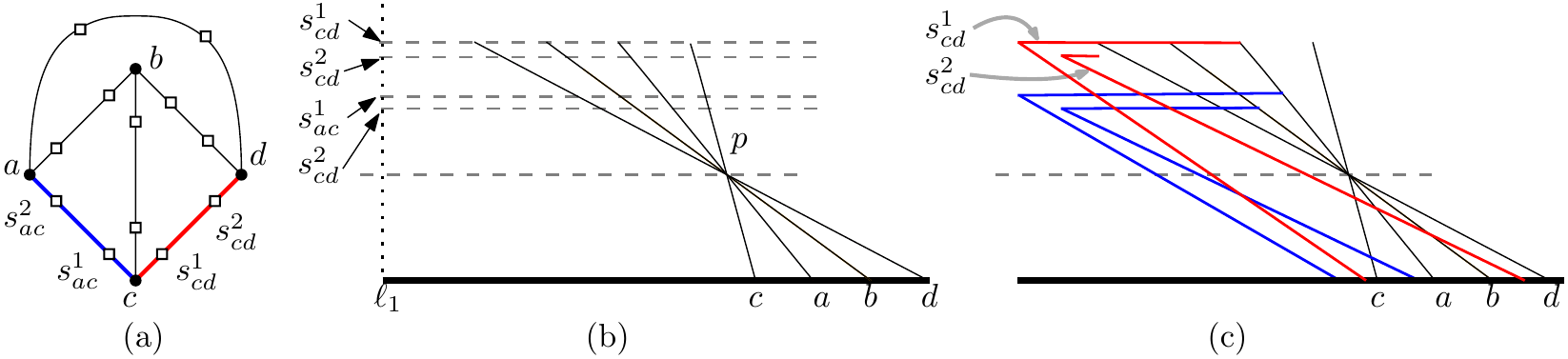}
    \caption{Illustration for the (a) graph $G$, and (b) the construction of the original strings. (c) Construction of a  grounded 1-bend string representation for $\overline{G}$.}
    \label{fig:x}
\end{figure*}


\subsection{Grounded 1-bend string representation with $y$-monotone  strings}

Assume that $G$ is the 2-subdivision of a cubic graph  $H$ (Figure~\ref{fig:x}(a)). Let the $x$-axis be the ground line. While constructing the representation for $\overline{G}$, we  will refer to the strings corresponding to the vertices that originally belong to  $H$ as the \emph{original strings}, and the other strings as the \emph{division strings}. 

Note that the original vertices form a clique in $\overline{G}$. For each vertex $i=1\ldots,n$ in $H$, we construct an original string, which is a straight line segment with end-points $(i,0)$ and  $(-3ni,6n+2)$. All these lines  pass through the point $p=(0,2)$. Consequently, we obtain a set of pairwise intersecting strings (Figure~\ref{fig:x}(b)).

We now describe the construction for the division strings. Let $\ell_1$ be the vertical segment that starts at   $(-3n^2-1,6n+2)$ and hits the ground line (Figure~\ref{fig:x}(b)). 
Then there are $6n$ horizontal lines (through integral coordinates)  between $y=3$ and $y=6n+6$. This number is larger than $3n$, i.e., the number of division vertices in $\overline{G}$. We use these lines to create the division strings. 
 
We number the edges of $H$ from  $1$ to $(3n/2)$. Let $(c,d)$ be the $j$th edge in $H$ and let $s_{cd}^1,s_{cd}^2$ be the corresponding  division vertices in $G$. The vertex $s_{cd}^1$ (resp., $s_{cd}^2$) is adjacent to all the original  
vertices and division vertices of $\overline{G}$, except for $c$ (resp., $d$) and $s_{cd}^2$ (resp., $s_{cd}^1$). We now construct a pair of division strings that realize these adjacencies.   

Let $c$ and $d$ be the $t$th and $r$th  vertex of $H$. Assume without loss of generality that the string of $c$ appears to the left of the string of $d$ on the ground line. 
The division string for $s_{cd}^1$ starts at the ground line,  makes a bend at $\ell_1$, and then intersect all the original strings that lie to the left of the string for $c$. In particular, the string starts at $(t-\frac{1}{j},0)$, makes a right turn at $q=(-3n^2-1,6n-j+3)$ following the line $y=(6n-j+3)$, and stops as soon as it intersects all the strings that appear before the string of $c$ on this line.  
The division string for $s_{cd}^2$ starts at $w=(r-\frac{1}{j},0)$, makes a right turn at the intersection point of the lines $qw$ and $y=(6n-j+2)$, and continues until  it intersects all the strings that appear before the string of $d$ (Figure~\ref{fig:x}(c)). 

Since the permutation of the original strings on the ground line is opposite to the permutation on the line $y=6n+2$, it is straightforward to verify that the string for $s_{cd}^1$ (resp., $s_{cd}^2$) intersects all the original strings except the one for $c$ (resp., $d$).

By construction, the strings of $s_{cd}^1$ and $s_{cd}^2$ are disjoint and intersect all the previously added division strings. Otherwise, suppose for a contradiction that a  previously added division string $z$  has not been intersected by the string for $s_{cd}^2$. Then  $z$ should appear to the left of $s_{cd}^2$. Thus the segment of $z$ that touches the ground line will be almost vertical. Therefore, $z$ will have a  negative $x$-coordinate at the ground line, which contradicts the property that every division string starts with a positive $x$-coordinate. The argument is same for $s_{cd}^1$. 
 This completes the grounded string representation for $\overline{G}$.

Since the coordinates explicitly described above are polynomial in $n$, the intersection of the line segments required to carry out the construction also have coordinates of polynomial size. Hence one can compute the string representation in polynomial time. Note that the strings that we computed are $y$-monotone. To make the strings strictly $y$-monotone, one can carry out the same construction for division strings with a set of slanted parallel lines  of small positive slope, instead of horizontal ones. We thus have the following theorem. 

\begin{theorem}
The maximum clique problem is APX-hard for  grounded 1-bend string graphs, even when the strings are strictly $y$-monotone. 
\end{theorem}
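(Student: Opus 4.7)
The plan is to complete the reduction sketched above from maximum independent set in $2$-subdivisions of cubic graphs, which is APX-hard, to maximum clique in grounded $y$-monotone 1-bend string graphs. Given an instance $G$ obtained as the $2$-subdivision of a cubic graph $H$ on $n$ vertices, I would work with $\overline{G}$, produce in polynomial time the grounded 1-bend string representation described above, and then argue that any polynomial-time $(1-\alpha)$-approximation to maximum clique on the resulting instance translates into a $(1-\alpha)$-approximation to maximum independent set on $G$, contradicting APX-hardness.

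First I would verify correctness of the constructed representation by checking each required adjacency of $\overline{G}$: (i) all original strings pass through the common point $p$, hence are pairwise crossing, matching the clique formed in $\overline{G}$ by the original vertices of $H$ (they are pairwise nonadjacent in $G$ since any two are separated by a path of length $3$); (ii) thanks to the reversal of the left-to-right order of original strings between the ground line and the line $y=6n+2$, the division string for $s_{cd}^1$ (resp.\ $s_{cd}^2$) crosses every original string except the one for $c$ (resp.\ $d$); (iii) for each edge $(c,d)$, the strings $s_{cd}^1$ and $s_{cd}^2$ are disjoint because they bend at different integer heights and their starting $x$-coordinates differ by a small offset depending on the edge index $j$; (iv) any two division strings coming from different edges cross, which follows from the observation in the excerpt that a previously drawn division string cannot lie entirely to the left of a newly added one, since every division string has a near-vertical first segment starting at a positive $x$-coordinate while the vertical guide line $\ell_1$ sits at $x = -3n^2 - 1$. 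All stated coordinates are polynomials in $n$, so the representation has polynomial bit-size and is computable in polynomial time.

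Next I would upgrade the representation to \emph{strictly} $y$-monotone by replacing the horizontal guide lines $y = 6n - j + \mathrm{const}$ with parallel lines of small positive slope $\delta > 0$; choosing $\delta$ sufficiently small compared to $1/n^2$ preserves the combinatorial intersection pattern while making every bent string strictly increasing in $y$ (the near-vertical initial segment is already strictly $y$-monotone by construction). Running the hypothesized approximation $\mathcal{A}$ on $\overline{G}$ and returning the clique it outputs as an independent set in $G$ then yields the desired contradiction. The main obstacle is item (iv): one must formally certify that every pair of division strings meant to be adjacent in $\overline{G}$ truly intersects in the drawing, across all four combinations of (near-vertical, near-horizontal) portions meeting. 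I would expand the positive-starting-$x$ invariant into an explicit case analysis parameterized by the two edge indices $j < j'$ and by whether each of the two strings is a type-$1$ or type-$2$ division string, using the fact that a later-indexed string bends lower and starts with a smaller offset, which forces its near-horizontal arm to cross both arms of the earlier one before terminating.
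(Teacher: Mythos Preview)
Your proposal is correct and follows essentially the same approach as the paper: the same reduction from maximum independent set in $2$-subdivisions of cubic graphs via the complement, the same explicit coordinates for original and division strings, the same verification of the four adjacency types, and the same slanting trick for strict $y$-monotonicity. If anything, you propose to justify item (iv) (pairwise crossing of division strings from distinct edges) more carefully than the paper itself does.
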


\subsection{Grounded string representation on a few lines}

Given a 1-bend string, we will refer to its two endpoints as  \emph{fixed} or \emph{free} depending on whether they lie on the ground line or not. Similarly, we call its segments as  \emph{fixed} or \emph{free} depending on whether the segment is adjacent to the ground line or not. 

We slightly modify the construction of the previous section such that the bends and end-points of the strings lie on at most three lines (above the ground line): $\ell_1,\ell_2,\ell_3$, as illustrated in Figure~\ref{fig:y}. We omit the coordinate details for this construction. It is straightforward to use a very similar approach as in the previous section to compute the representation with coordinates of polynomial size. 

We create the original strings such that bend-points and free endpoints lie on  $\ell_2$ and $\ell_1$, respectively. We also ensure that the  order of the free endpoints on $\ell_1$ is opposite to the order of the fixed  endpoints.

Let $q$ be a point to the right of all the free endpoints of the original strings on $\ell_1$. A pair of division strings start at the ground line near their corresponding original strings (as in our earlier construction), but their bends  are placed  on $\ell_2$ and $\ell_3$ such that the strings remain disjoint. We also ensure that fixed segments of these strings lie to the right of $q$. Consider now a division string $z$ that starts at the ground line and reaches $\ell_2$ or $\ell_3$. If all the required intersections are realized at its fixed segment, then we create the free segment by connecting $q$ and its bend-point. If only a subset  of the required intersections is realized at its fixed segment, then the free segment is created by connecting the bend-point to  an appropriate point $q'$ to the left of $q$ on $\ell_1$. Since the permutation of the fixed endpoints of the original strings is  the reverse of the permutation of their free endpoints, such  a point $q'$ must exist.

\begin{figure}[h]
    \centering
    \includegraphics[width=.75\linewidth]{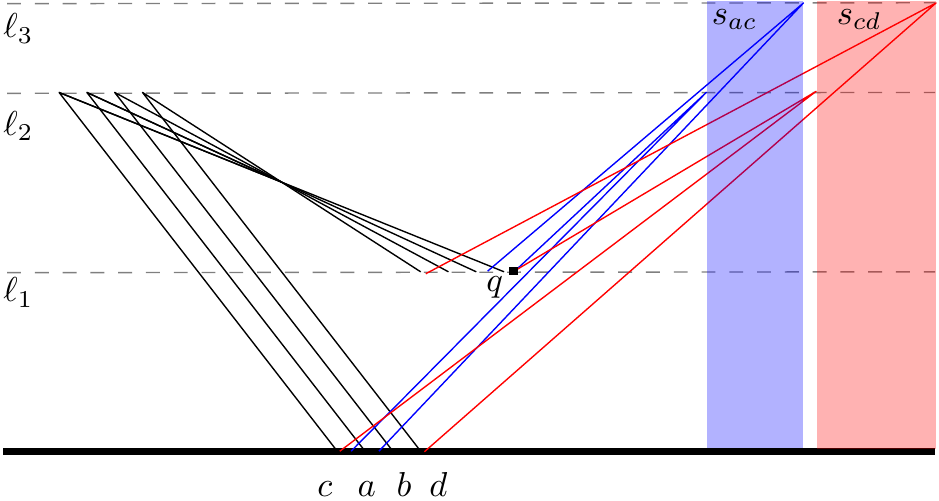}
    \caption{  Construction of a  grounded string representation for $\overline{G}$ on three lines.}
    \label{fig:y}
\end{figure}

The division vertices are created in pairs and their bend-points are placed to the right of all the previously created bend-points, as illustrated using the vertical stripes in  Figure~\ref{fig:y}. Note that every newly created string $z$ needs to reach either $q$ or to a point to the left of $q$ on $\ell_1$. Therefore, we can choose the new bend-point of $z$ sufficiently far apart such that its free segment crosses all the previously added division strings at their fixed segments. This completes the required construction for $\overline{G}$. We thus have the following theorem.

\begin{theorem}
The maximum clique problem is APX-hard for  grounded 1-bend string graphs, even when the bends and end-points are restricted to lie on three horizontal lines. 
\end{theorem}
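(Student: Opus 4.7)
The plan is to mirror the argument of the previous theorem: we reduce from maximum independent set in $2$-subdivisions of cubic graphs, which is APX-hard, by constructing from such a graph $G$ a grounded 1-bend string representation of its complement $\overline{G}$ in which every bend and every endpoint lies either on the ground line or on one of the three horizontal lines $\ell_1,\ell_2,\ell_3$. Given a hypothetical $(1-\varepsilon)$-approximation algorithm for maximum clique in this restricted class, applying it to the representation of $\overline{G}$ would yield an independent set of $G$ whose size is within a factor $(1-\varepsilon)$ of optimal, contradicting the APX-hardness of independent set in $2$-subdivisions of cubic graphs. Hence it suffices to justify the three-line construction sketched above.

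First I would formalise the placement of the $n$ original strings, giving their fixed endpoints the order $1,2,\ldots,n$ on the ground line, their bends a common narrow vertical strip on $\ell_2$, and their free endpoints the reverse order $n,n-1,\ldots,1$ on $\ell_1$. A short calculation then shows that every pair of original strings crosses (near their shared strip on $\ell_2$), so the $n$ original vertices form a clique in the resulting intersection graph, as demanded by $\overline{G}$. Next I would process the edges of $H$ in order $j=1,\ldots,3n/2$, and for edge $(c,d)$ introduce a pair of division strings for $s_{cd}^1,s_{cd}^2$ whose fixed endpoints sit on the ground line just left of those of $c$ and $d$ at offsets $1/j$, whose bends occupy a fresh vertical stripe on $\ell_2$ and $\ell_3$ placed to the right of all previously created bends, and whose free segments terminate on $\ell_1$ either at the common right-anchor $q$ or at a chosen point $q'$ strictly to the left of $q$. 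Because the free endpoints of the original strings are arranged on $\ell_1$ in the reverse of their ground-line order, for every vertex $v$ there is a slot $q'_v$ between two consecutive free endpoints such that a segment from the rightward bend to $q'_v$ crosses every original string except the one for $v$; choosing $q'=q'_c$ for $s_{cd}^1$ and $q'=q'_d$ for $s_{cd}^2$ realises exactly the required original-division adjacencies of $\overline{G}$.

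The main obstacle is verifying the division-division intersection pattern. The pair $s_{cd}^1,s_{cd}^2$ must remain disjoint, because $s_{cd}^1 s_{cd}^2$ is an edge of $G$ and thus a non-edge of $\overline{G}$; placing their bends on the two different lines $\ell_2$ and $\ell_3$ inside the common fresh stripe, offsetting their ground-line starts by $1/j$, and assigning the higher bend to whichever of the two strings has the farther target on $\ell_1$ keeps the two fixed segments separated and their free segments non-crossing along their shared horizontal extent. For two division strings from different pairs I would argue, by induction on the stripe order, that any newly added free segment, travelling from its stripe on the right all the way back to $q$ or some $q'$ on $\ell_1$, must sweep across every earlier stripe and thus cross both fixed segments of each previously added pair; symmetrically, the new fixed segments, placed almost vertically in a stripe further right than all earlier ones, are crossed by the free segments of every earlier pair on their way to $\ell_1$. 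All coordinates introduced are rationals of polynomial bit-size, so the entire representation can be built in polynomial time, completing the reduction.
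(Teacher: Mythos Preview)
Your proposal is essentially the same construction as the paper's: original strings with bends on $\ell_2$ and free endpoints on $\ell_1$ in reversed order, division strings with bends placed in successively rightward stripes on $\ell_2$ and $\ell_3$, free segments aimed at a fixed anchor $q$ or a suitable $q'$ on $\ell_1$, and the observation that each new free segment sweeps across all earlier stripes to pick up the required division--division crossings. One small slip: your ``symmetrically'' clause describes the new fixed segments as ``placed almost vertically in a stripe further right,'' but in your own setup the fixed segment runs from near $c$ or $d$ on the ground line out to the far-right stripe, so it is nearly horizontal, not vertical; fortunately this sentence is redundant (new crosses old is the same relation as old crosses new) and can simply be dropped.
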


\section{Two-sided $\mathbf{L}$-shapes}
\label{2sl}

In this section we consider the case when the grounded strings are two-sided $L$-shapes. We use dynamic programming to compute a maximum clique on this class of graphs, and give an $O(n^3)$-time algorithm to compute a maximum clique. We then improve the running time to $O((\frac{n\log n}{\log \log n})^2)$. We assume that all the $L$-shapes are in general position, i.e., no two segments in the intersection representation lie on the same horizontal or vertical line.  

\subsection{Technical Details}

Let $P$ be a set of points in $\mathbb{R}^2$, where each point is assigned a positive weight. 
An \emph{orthogonal range maxima query} on $P$ provides an axis aligned rectangle and returns a point with the maximum weight.  

For small point sets in a rank space, there exists a data structure that support queries and weight updates in $O(1)$ time, as stated in the following lemma. The idea is to reduce a query on $P$ into a constant number of  queries on various sets, where each set  contains at most $t=\frac{\log n}{6\log \log n}$ points. The solution to all possible $O(\log ^2 n)$ queries for all possible sets that can arise from at most $t$ points are precomputed and stored in a table.  Hence a query can be answered in constant time, and an update can be done in constant time by maintaining a pointer that keeps track of the set that represents the current set of points.  

\begin{lemma}[Nekrich~\cite{DBLP:conf/compgeom/Nekrich20}]
\label{l1}
If a point set $P$ contains an  $O(\log^{2\varepsilon} n)$ points, where $0<\epsilon<1$, and coordinates and weights  of all points are bounded by $O(\log n)$, 
then we can support range maxima queries under  weight updates on $P$ 
 in $O(1)$ time.
\end{lemma}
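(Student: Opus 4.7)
The plan is to apply the standard word-RAM ``Four Russians'' style lookup-table technique: encode the entire state of $P$ in $o(\log n)$ bits so that the configuration fits in a single machine word and can be used as an index into a precomputed table whose entries are query answers. Under the word-RAM model with word size $\Theta(\log n)$, bit-masking, shifts, and arithmetic on such encodings are $O(1)$.

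First I would bound the number of bits needed to describe a valid configuration. Each point has an $x$-coordinate, a $y$-coordinate, and a weight, each an integer in a range of size $O(\log n)$, so a single point costs $O(\log\log n)$ bits. With $m = O(\log^{2\varepsilon} n)$ points, the whole configuration fits in $O(\log^{2\varepsilon}n\cdot\log\log n)$ bits, which is $o(\log n)$ once the constants are chosen appropriately (and in particular for $\varepsilon$ bounded away from $1$; a tighter rank-space encoding can be used to push the admissible range of $\varepsilon$ if needed). A query is an axis-aligned rectangle specified by four coordinates, and thus fits in $O(\log\log n)$ bits, so there are only $\mathrm{poly}(\log n)$ possible queries.

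Next, I would precompute a table $T$ indexed by pairs $(C,Q)$, where $C$ is an admissible configuration and $Q$ is a query rectangle, storing $T[C,Q]=\textrm{answer}(C,Q)$. The total number of such pairs is $2^{o(\log n)}\cdot\mathrm{poly}(\log n)=n^{o(1)}$, and each entry can be filled by a brute-force scan of $C$ in $O(m)$ time, so the entire table is built in $n^{o(1)}$ time and space as a one-time preprocessing step whose cost is subsumed by the outer algorithm's initialization. I then maintain a single machine word $w$ holding the encoding of the current $P$. A range maxima query is executed as one lookup $T[w,Q]$, and a weight update overwrites the $O(\log\log n)$-bit field corresponding to that point's weight via constant-time bit-mask operations; insertions and deletions, if needed, are handled analogously by clearing or writing one point's slot.

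The main obstacle is making the bit-budget $m\cdot O(\log\log n)=o(\log n)$ close under the stated hypothesis on $\varepsilon$; this is where the precise constant hidden in the $O(\log n)$ bound on coordinates/weights matters, and where one may need to switch from a na\"ive per-point encoding to a more compact representation (e.g., recording the sorted rank of each point within a common universe, or using a canonical ordering of the points) to squeeze the encoding below $\log n$ bits. Beyond this accounting, the remaining steps are routine word-RAM manipulations and the correctness of the table is immediate from its definition.
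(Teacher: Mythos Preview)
Your approach is essentially the same Four-Russians/table-lookup idea the paper sketches (the lemma is cited from Nekrich~\cite{DBLP:conf/compgeom/Nekrich20}, and the paper only gives an informal outline in the paragraph preceding the statement). The one substantive difference is in how the bit budget is met: you encode all of $P$ directly in one word, which forces the side condition $m\cdot O(\log\log n)=o(\log n)$ and only goes through cleanly when $2\varepsilon<1$; the paper's sketch instead first \emph{reduces} a query on $P$ to a constant number of queries on subsets of size at most $t=\tfrac{\log n}{6\log\log n}$, and tabulates answers only for those tiny sets. That extra reduction step is what buys the full range $0<\varepsilon<1$ without the compact-encoding gymnastics you allude to, and it is also why the paper speaks of ``maintaining a pointer'' to the current set rather than bit-masking a single configuration word. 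In the paper's own application (Lemma~\ref{l3}) only $\varepsilon\le 1/2$ is ever used, so your direct encoding would in fact suffice there.
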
 
  
For the general point set, we modify the range tree data structure to obtain an $O((\frac{\log n}{\log \log n})^2)$ time query and weight update.

\begin{lemma}
\label{l3}
Let $P$ be a set of $n$ points with all the coordinates and weights bounded by $O(n)$. Then there exists a data structure that supports range maxima queries and weight updates on $P$ in   
$O((\frac{\log n}{\log \log n})^2)$ time.
\end{lemma}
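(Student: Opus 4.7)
The plan is to modify the standard two-level range tree by increasing the branching factor at every level to $f=\lfloor\log^{\varepsilon} n\rfloor$ for a small constant $\varepsilon\in(0,1/2)$, so that the two levels each have height $h=\Theta(\log n/\log\log n)$, and to handle the constant-size per-node bookkeeping with the small-set data structure of Lemma~\ref{l1}.

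First, I would build a primary balanced tree $T_x$ on the $x$-sorted points with branching factor $f$ and height $h$. At every internal node $v\in T_x$ I would build a secondary balanced tree $T_y^v$ on the points of $v$'s subtree, sorted by $y$, also with branching factor $f$ and height $h$. For every pair $(v,u)$, where $u$ is a node of $T_y^v$, I maintain an $f\times f$ table whose $(i,j)$-entry records the maximum weight of a point that lies both in the $i$-th $T_x$-child of $v$ and in the $j$-th $T_y^v$-child of $u$. Since $f^2=O(\log^{2\varepsilon} n)$ and the child indices belong to $[1,f]\subseteq[1,O(\log n)]$, this table is a valid input for Lemma~\ref{l1} (after local rank-compression of the stored values), so rectangular maxima queries and single-entry weight updates on it both cost $O(1)$.

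To answer a query on $[x_1,x_2]\times[y_1,y_2]$, I descend $T_x$ until the two root-to-leaf paths to $x_1$ and $x_2$ diverge and then continue down each path; every visited primary node $v$ contributes a contiguous range $[c_1,c_2]$ of fully-covered $T_x$-children. For each such $v$ I descend $T_y^v$ in the same way, and at every one of the $O(h)$ visited secondary nodes $u$, the fully-covered $T_y^v$-children form a contiguous range, so the contribution of $(v,u)$ to the answer is obtained in $O(1)$ by a single rectangular query to the $(v,u)$-table. Summed over all $O(h)$ primary and $O(h)$ secondary path-nodes, the query costs $O(h^2)=O((\log n/\log\log n)^2)$. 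A weight update at a point $p$ touches only the $h$ ancestors of $p$ in $T_x$, and for each such ancestor $v$ the $h$ ancestors of $p$ in $T_y^v$; the new value of the affected entry of the $(v,u)$-table equals the maximum over one row of the child-level Lemma~\ref{l1} table, computable in $O(1)$, so the update cost is also $O(h^2)$.

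The most delicate point will be reconciling Lemma~\ref{l1}'s $O(\log n)$ weight bound with the fact that the true point weights can be as large as $\Theta(n)$. My plan is to store, inside each $f\times f$ table, only the rank of each entry among the $f^2$ entries of that table, keeping the true weights in a parallel side-array, and to exploit the recursive identity ``entry $(v,u,i,j)$ equals the max of one row of the table at $(v,u_j)$'' so that a weight change at a leaf triggers only $O(h^2)$ entry recomputations, each itself a child-level $O(1)$ query. Confirming that the local rank-bookkeeping does not blow up the $O(h^2)$ budget -- possibly via an amortised analysis, or by invoking a slightly stronger small-set primitive than the one stated in Lemma~\ref{l1} -- will be the main technical obstacle.
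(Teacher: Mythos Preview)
Your construction is essentially the paper's own proof: a two-level range tree with branching factor $\log^{\varepsilon} n$, an $f\times f$ max-table at every secondary node indexed by (primary-child, secondary-child), and Lemma~\ref{l1} invoked on each such table after rank-compressing the weights (the paper only swaps the roles of $x$ and $y$). The rank-bookkeeping issue you flag as ``the main technical obstacle'' is handled in the paper exactly as you suggest---reduce weights to their ranks among the at most $\log^{2\varepsilon} n\le\log n$ entries of the local table---and the paper does not elaborate further on how updates interact with re-ranking, so you are not missing anything relative to the original argument.
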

\begin{proof}
 We construct the range tree $T_y$ with node degree $\log^\varepsilon n$ on the $y$-coordinates of the points. Let $w$ be a node of $T_y$ and let $S_w$ be the set of points that is stored at the leaves of subtree rooted at $w$. We then store another range tree $T^w_x$ with node degree $\log^\varepsilon n$ on the $x$-coordinates of points $S_w$. We refer to $T^w_x$ as a second-level tree.
 
 Since the height of $T_y$ is $O(\log n/\log \log n)$, a query (or an update) selects $O(\log^{1+\varepsilon} n)$ nodes 
 that contain the  points with correct $y$-coordinates. These nodes can be divided into $O(\log n/\log\log n)$ groups, such that all nodes in the same group are siblings. Consider such a group of siblings and their common parent $w$. At each such node $w$, we need to query $T^w_x$ to identify $O(\log n/\log \log n)$ nodes with correct $x$-coordinates. Hence we have  $O((\frac{\log n}{\log \log n})^2)$ candidate nodes in the second-level trees. To achieve a  $O((\frac{\log n}{\log \log n})^2)$ query (or update) time, we now show how each selected nodes in the second-level trees can be processed in $O(1)$ time.
 
 First, at each second-level tree $T^w_x$, we modify the $y$-coordinates of the points so that they are bounded by $O(\log ^\varepsilon n)$. To achieve this, for each point $p$, we find the child of $w$ that store $p$. If $p$ is stored at the $j$th child of $w$, then we replace the $y$-coordinate of $p$ with $j$. Since the degree is bounded by $O(\log ^\varepsilon n)$, the $y$-coordinates of the points of the second level trees are now bounded by $O(\log ^\varepsilon n)$. Second, at each $u$ node of  $T^w_x$, we select $O(\log ^{2\varepsilon} n)$ points and store them in a data structure of Lemma~\ref{l1}. These are the points with the maximum weight for each $y$-coordinate and each child of $u$. Since there are $O(\log ^{\varepsilon} n)$ distinct $y$-coordinates and $O(\log ^{\varepsilon} n)$ children, the number of points is $O(\log ^{2\varepsilon} n)$. Note that the $y$-coordinates of these points are already bounded by $O(\log n)$. We map the $x$-coordinate of each selected point $q$ to the index of the child of $u$ that stores $q$. Hence the $x$-coordinates become bounded by $O(\log ^{\varepsilon} n)$. Since there are at most $O(\log^{2\epsilon} n)$ points, the weight can be reduced to a rank space. We set $\varepsilon\le  1/2$ such that the rank is bounded by $O(\log n)$. Hence we can store these points in a  data structure of Lemma~\ref{l1}. 
 
For each of the $O((\frac{\log n}{\log \log n})^2)$ queries (or updates)  at the second-level trees, we can now use the data structure of Lemma~\ref{l1}. Hence the overall running time becomes  $O((\frac{\log n}{\log \log n})^2)$. 
\end{proof}

\subsection{Computing a Clique}

Let $G$ be an intersection graph of two-sided $L$-shapes, and let $Q$ be a maximum clique with at least two vertices in $G$. Let $a$ and $b$  be the highest and second-highest $L$-shape in $Q$, respectively, and without loss of generality assume that $a$ appears to the left of $b$ on the ground line (Figure~\ref{fig:fig2}).  Then any other $L$-shape $c$ in $Q$ must be below the line $\ell$ determined by the horizontal segment of $b$. Furthermore, since $c$ intersects both $a$ and $b$, its endpoint on the ground line must be to the left of $a$ or to the right of $b$. In other words, the  interval $[a,b]$ on the ground line acts as a forbidden interval for the other vertices in the clique. 

If $c$ is the next highest clique after $b$ in $Q$, then depending on whether it lies to the left or right of the interval $[a,b]$,  the forbidden region for the remaining vertices in $Q$ grows to either $[c,b]$ or $[a,c]$. Without loss of generality assume that $c$ lies to the right of $b$, and the new forbidden region is $[a,c]$. Then an $L$-shape intersecting $c$ and $a$ must also intersect $b$, where $b$ already belongs to $Q$. One can thus continue adding a new $L$-shape that intersect the $L$-shapes representing the current forbidden interval, without worrying  about the $L$-shapes which have been chosen already. We use this idea to design the dynamic programming algorithm. 

If $G$ does not contain any edge, then the maximum clique size is 1. Otherwise, let $D(a,b)$ denote a maximum clique, where $a,b$, or $b,a$ are the first and second highest $L$-shapes, and $a$ lies to the left of $b$. Let $c$ be an $L$-shape that intersects both $a$ and $b$, and for an $L$-shape $w$, let $w_x$ be its $x$-coordinate on the ground line. Then  
\[
D(a,b) =
\begin{dcases*}
2 \text{, if $c$\;doesn't\;exist\,,} & \\
   \max
\begin{dcases*}
\max_{c_x<a_x} \{D(c,b)\} +1
   & \\[1ex]
\max_{c_x>b_x} \{D(a,c)\}+1,
   &   otherwise.
\end{dcases*}
\end{dcases*}
\]

\begin{figure}[ht]
\centering
\includegraphics[width=.6\textwidth]{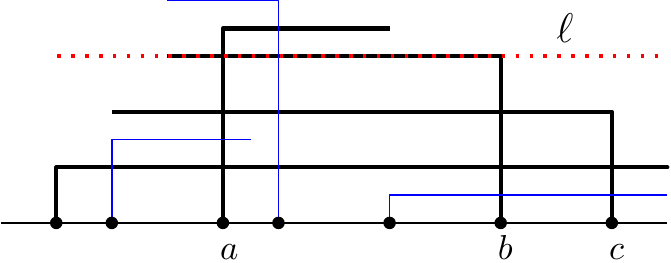}
\caption{Illustration for the dynamic programming for two-sided $L$-shapes. The clique $Q$ is shown in black.}
\label{fig:fig2}
\end{figure}

We take the maximum over all pairs of $L$-shapes $a,b$ in the input. We can use a 2-dimensional table $T(a,b)$ to store the solution of $D(a,b)$. The size of the dynamic programming table is $O(n^2)$, where $n$ is the number of $L$-shapes. Computing each entry of the table requires $O(n)$ table look-up. Therefore, it is straightforward to compute the maximum clique in $O(n^3)$ time. 

We now show how this computation can be done faster by using the range maxima data structure of Lemma~\ref{l3}. Assume that we are computing an entry $D(a,b)$ where $a$ and $b$ are the first and second highest $l$-shapes, respectively.  For each $L$-shape $v$, we maintain a data structure  $S_v$  that maintains a set of weighted points as follows. Each point $p\in S_v$  corresponds to an $L$-shape $w$ that intersects $v$. The $x$ and $y$-coordinates of $p$ are the $x$ and $y$-coordinates of the bend point of $w$, and the weight of $p$ is the solution of $D(v,w)$ or $D(w,v)$ depending on whether $v$ precedes $w$ on the ground line. 

To compute $D(a,b)$, it now suffices to compute two range maxima query (Figure~\ref{fig:fig2}), as follows. One range is defined by the vertical segment of $b$, the horizontal line $\ell$ determined by the highest point of $b$, and the ground line. The other range is defined by the vertical segment of $a$, the horizontal line $\ell$ determined by the highest point of $b$, and the ground line.  

Since a range maxima query takes $O((\frac{\log n}{\log \log n})^2)$ time  (Lemma~\ref{l3}), the dynamic programming table can be filled in $O(n^2(\frac{\log n}{\log \log n})^2)$ time. The following theorem summarizes the result of this section.

\begin{theorem} 
Given a set of $n$ grounded two-sided $L$-shapes, one can compute  a maximum clique in the corresponding intersection graph in $O((\frac{n\log n}{\log \log n})^2)$ time. 
\end{theorem}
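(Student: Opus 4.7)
The plan is to realise the dynamic program $D(a,b)$ stated earlier while shaving the per-cell cost from $O(n)$ down to $O\!\left(\left(\log n/\log\log n\right)^2\right)$ by invoking the range-maxima data structure of Lemma~\ref{l3}. The overall architecture is to process the $O(n^2)$ pairs $(a,b)$ in a valid dependency order, and for every $L$-shape $v$ maintain one Lemma~\ref{l3}-style dictionary $S_v$ whose weighted points are in bijection with the $L$-shapes intersecting $v$, each point sitting at the bend coordinates of the corresponding shape and carrying as its weight the most recently computed value of $D(v,\cdot)$ or $D(\cdot,v)$.

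First I would fix the processing schedule. Sorting pairs $(a,b)$ by $\min(\mathrm{height}(a),\mathrm{height}(b))$ in increasing order ensures that when $D(a,b)$ is to be computed, every subproblem $D(c,b)$ with $c_x<a_x$ and every $D(a,c)$ with $c_x>b_x$ that the recurrence could invoke has already been evaluated and written back, because any third-highest $L$-shape $c$ in such a clique must lie strictly below both $a$ and $b$. All weights are initialised to $2$ to agree with the base case $D(v,w)=2$, and a one-time rank-compression of the bend coordinates lets me feed Lemma~\ref{l3} with integers in $\{1,\ldots,n\}$, as its hypothesis requires.

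The geometric core of the proof is to reduce each cell to two orthogonal range-maxima queries. For the transition $D(c,b)+1$ with $c_x<a_x$, I would query $S_b$ over the rectangle bounded above by the horizontal line $\ell$ through the bend of $b$, below by the ground line, and on the right by the vertical segment of $a$; for $D(a,c)+1$ with $c_x>b_x$, I would query $S_a$ over the symmetric rectangle lying to the right of $b$'s vertical. A short turning-direction case analysis, using $\mathrm{height}(c)<\mathrm{height}(b)<\mathrm{height}(a)$ together with the general-position assumption, shows that every bend-point inside the first rectangle corresponds to a right-turning $L$-shape whose horizontal must cross $a$'s vertical, and dually for the second rectangle, so the two queries return exactly the admissible third-highest candidates. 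Once $D(a,b)$ is computed, I would write it back as a weight update at the point for $b$ inside $S_a$ and at the point for $a$ inside $S_b$.

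Each cell thus costs $O(1)$ queries plus $O(1)$ updates, each executed in $O\!\left((\log n/\log\log n)^2\right)$ by Lemma~\ref{l3}, giving the advertised total $O\!\left((n\log n/\log\log n)^2\right)$. The step I expect to be the main obstacle is precisely this correspondence between a rectangle over bend coordinates and the true feasible set of $c$'s: one must verify that membership of $c$ in $S_b$ (respectively $S_a$) together with the rectangle condition is equivalent to $c$ intersecting both $a$ and $b$, having ground endpoint outside $[a_x,b_x]$, and sitting strictly below both in height. Everything else — the scheduling argument, the initialisation, and the amortised query accounting — is routine once that geometric equivalence has been pinned down.
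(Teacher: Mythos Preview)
Your proposal is correct and takes essentially the same approach as the paper: maintaining for each $L$-shape $v$ a Lemma~\ref{l3} structure $S_v$ over bend coordinates of the shapes intersecting $v$, and evaluating each $D(a,b)$ via the two rectangular range-maxima queries (bounded above by $\ell$ and laterally by the verticals of $a$ and $b$) that you describe. Your explicit processing order by $\min$-height, initialisation of weights to $2$, and rank compression are details the paper leaves implicit, and the geometric equivalence you flag as the main obstacle is exactly the content the paper relies on (and likewise does not fully spell out).
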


\section{ Two-sided  Square $L$-shapes}
\label{2ssl}
In this section, we consider two-sided square $L$-shapes, and give an $O(n^2\log \log n)$-time algorithm to compute a maximum clique. We assume the $L$-shapes are in general position.  

\begin{figure}[h]
\centering
\includegraphics[width=\linewidth]{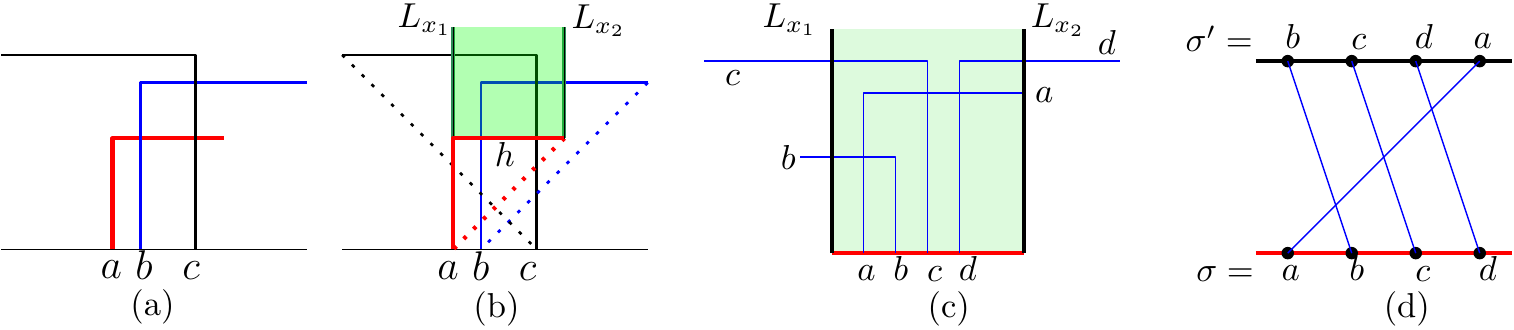}
\caption{ (a) A maximum clique in a two-sided square $L$-shape representation. (b) Illustration for the properties of the clique.}
\label{fig:fig5}
\end{figure}

We first discuss some geometric properties of a maximum clique, which will help us to design the dynamic programming (Figure~\ref{fig:fig5}). Let $a$ be the lowest $L$-shape of a maximum clique $Q$. Then all other $L$-shapes must intersect the horizontal segment $h$ of $a$. Let $L_{x_1}$ and $L_{x_2}$ be vertical lines through the left and right endpoints of $h$. We now have the following observation.
\begin{lemma}
\label{lem:circ}
Every square $L$-shape that intersects $h$, must intersect $L_{x_1}$ or $L_{x_2}$.
\end{lemma}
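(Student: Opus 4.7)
The plan is to use two consequences of the square restriction. The horizontal segment $h$ of $a$ has length exactly $y_a$, since $a$'s vertical segment runs from the ground up to height $y_a$; and any square $L$-shape $b$ whose vertical segment crosses $h$ must have its bend strictly above $h$, which forces its horizontal segment to be strictly longer than $h$. Together these two facts will force the horizontal segment of $b$ to stick out past at least one endpoint of $h$, and hence to cross $L_{x_1}$ or $L_{x_2}$.

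First, I would pin down how $b$ can meet $h$. Since both $h$ and the horizontal segment of $b$ are horizontal, the general position assumption (no two segments lie on the same horizontal line) rules out the horizontal segment of $b$ being collinear with $h$, so for $b\ne a$ the only way $b$ can meet $h$ is for the vertical segment of $b$ to cross it. Writing the bend of $b$ as $(x_b,y_b)$, the vertical segment of $b$ runs from $(x_b,0)$ to $(x_b,y_b)$, so meeting $h$ forces $x_b\in[x_1,x_2]$ and $y_b\ge y_a$; combining with the horizontal general position again (no two horizontal segments share a $y$-coordinate) upgrades this to the strict inequality $y_b>y_a$.

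Next I would close the argument with a single length comparison. Because $a$ is square, $x_2-x_1=y_a$; because $b$ is square, the horizontal segment of $b$ has length exactly $y_b>y_a=x_2-x_1$. That horizontal segment emanates from $(x_b,y_b)$ with $x_b\in[x_1,x_2]$, either to the left or to the right. If it goes left, its far endpoint has $x$-coordinate $x_b-y_b<x_b-(x_2-x_1)\le x_1$, so the segment crosses $L_{x_1}$; symmetrically, if it goes right, it crosses $L_{x_2}$. The case $b=a$ is trivial since $h$ itself has endpoints on $L_{x_1}$ and $L_{x_2}$. I do not expect any real obstacle here: the argument is a direct length inequality once the square property is unpacked, and the only subtle step, excluding degenerate collinear configurations, is handled by the general position assumption stated at the start of the section.
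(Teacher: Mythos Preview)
Your proof is correct and takes essentially the same approach as the paper: both arguments hinge on the length comparison $y_b > y_a = x_2 - x_1$ coming from the square property of $a$ and $b$ together with the bend of $b$ lying above $h$. The only cosmetic difference is that the paper phrases this as a contradiction (if $b$ misses both $L_{x_1}$ and $L_{x_2}$ then its horizontal, hence vertical, segment is too short to reach the ground), whereas you argue the contrapositive directly.
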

\begin{proof}
 Suppose for a contradiction that $b$ is an $L$-shape that intersects $h$ but does not intersect $L_{x_1}$ or $L_{x_2}$.  Then the length of the horizontal segment of $b$ is at most the length of $h$. Thus the maximum length for the vertical segment of $b$ is also bounded by the length of $h$. Since the bend-point of $b$ is above $h$,  the vertical segment of $b$ cannot reach the ground line, which contradicts that $b$ is a grounded square $L$-shape. 
\end{proof}

Let $R$ be the region above $h$, bounded by $L_{x_1}$ and  $L_{x_2}$. Let $H$ be the intersection graph induced by the $L$-shapes intersecting $h$. 
In the following we show that $H$ is a permutation graph. 
Here a \emph{permutation graph} is an intersection graph of straight line segments where the line segments are drawn between a pair of parallel lines, i.e., all the top endpoints of these segments lie on one line and the bottom endpoints lie on the other.  
\begin{lemma}
The graph $H$ is a permutation graph.
\end{lemma}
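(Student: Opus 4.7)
My plan is to view each $L$-shape of $H$, restricted to the rectangular region above $h$ and between $L_{x_1}$ and $L_{x_2}$, as a bent Jordan arc in a topological disk whose two endpoints lie on complementary arcs of the disk's boundary; the intersection graph of such an arc system is known to be a permutation graph.

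Let $R = (x_1, x_2) \times (y_h, \infty)$, where $y_h$ is the height of $h$. Every $b \in H$ has $x_b \in (x_1, x_2)$ because $b$'s vertical segment must cross $h$; and by Lemma~\ref{lem:circ} the horizontal segment of $b$ crosses $L_{x_1}$ or $L_{x_2}$. Hence $b \cap \overline{R}$ is a bent path with one endpoint at $(x_b, y_h)$ on the bottom edge $h$ and its other endpoint at $(x_2, h_b)$ or $(x_1, h_b)$ on the side $L_{x_2}$ or $L_{x_1}$, according to whether $b$ turns right or left. Moreover two $L$-shapes of $H$ intersect in the plane if and only if their arcs in $\overline{R}$ cross: vertical-vertical and horizontal-horizontal intersections are ruled out by general position (distinct $x$-coordinates and distinct heights, respectively), and every vertical-horizontal crossing $(x_b, h_{b'})$ satisfies $x_b \in (x_1, x_2)$ and $h_{b'} > y_h$, so it lies inside $\overline{R}$.

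Next I truncate $\overline{R}$ at some $Y$ larger than every $h_b$, obtaining a topological rectangle (a closed disk) whose boundary splits as the bottom arc $h$ and its complement $A$ (the two sides together with the top). Every $b \in H$ appears as a chord of this disk with one endpoint on $h$ and the other on $A$. It is a standard fact that the intersection graph of chords in a disk whose endpoints are split between two disjoint arcs of the boundary is a permutation graph: parameterizing the two arcs linearly gives two linear orders on the chords, and two chords cross if and only if their endpoint pairs alternate along $\partial \overline{R}$, which is equivalent to saying the two orders disagree after one of them is reversed. Explicitly, one may take $\sigma_1(b) = x_b$ and let $\sigma_2(b)$ be the arc-length parameter of $b$'s endpoint on $A$ when $A$ is traversed from $(x_2, y_h)$ up $L_{x_2}$, across the top, and down $L_{x_1}$ to $(x_1, y_h)$; this produces the desired permutation representation of $H$.

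The step demanding most care is matching adjacency in $H$ with the alternation pattern on $\partial \overline{R}$. I would verify this by a short case analysis on the two turn directions. The key geometric fact (extracted from the proof of Lemma~\ref{lem:circ}) is that $h_b > y_h = x_2 - x_1$ for every $b \in H$, so each horizontal segment of $H$ spans the entire strip $[x_1,x_2]$; this makes the ``horizontal reach'' conditions for intersection automatic, leaving the intersection predicate to depend only on which of $h_b, h_{b'}$ is larger (and on the relative turn directions), and this is precisely the condition captured by the alternation of the four boundary endpoints.
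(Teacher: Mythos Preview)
Your proposal is correct and follows essentially the same approach as the paper: both arguments use the same two orderings---the left-to-right order of the grounded endpoints on $h$, and the order of the exit points along the remainder of $\partial R$---and then verify that two $L$-shapes intersect if and only if these orders disagree, via a case analysis on the turn directions. Your phrasing ``each horizontal segment of $H$ spans the entire strip $[x_1,x_2]$'' is slightly imprecise (it only spans from $x_b$ to one side), but what you actually need and use---that it reaches the side it is heading toward---is exactly what Lemma~\ref{lem:circ} provides, and the case analysis you sketch is precisely the one the paper carries out.
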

\begin{proof}
Let $\sigma$ be the labels of the $L$-shapes that intersect $h$ in left to right order, and let $\sigma'$ be the labels of these $L$-shapes in the clockwise order of their intersection points on the boundary of $R$ excluding $h$. Let $H'$ be the permutation graph determined by the orderings $\sigma$ and $\sigma'$  on two parallel lines, as illustrated in  Figure~\ref{fig:fig5}(c)--(d). It now suffices to prove that two vertices are adjacent in $H$ (i.e., the corresponding $L$-shapes intersect) if and only if they are adjacent in $H'$ (i.e., the corresponding line segments  intersect).   

Let $a$ and $b$ be two $L$-shapes that intersect in the square-$L$ representation of $H$. If they both intersect $L_{x_1}$ (resp., $L_{x_2}$), then their ordering in $\sigma$ is different from that of $\sigma'$. If one intersects $L_{x_1}$ and the other intersects $L_{x_2}$, then again, their ordering in $\sigma$ is different from that of $\sigma'$. 
Thus in both cases, the corresponding vertices in $H'$ must be adjacent, i.e., the line segments must intersect in the permutation graph representation of $H'$.

Consider now the case when the $L$-shapes $a$ and $b$ do not intersect. If they both both intersect $L_{x_1}$ (resp., $L_{x_2}$), then on $L$-shape must `nest' the other and thus  their ordering would be the same in $\sigma$ and $\sigma'$. If one $L$-shape intersects $L_{x_1}$ and the other intersects $L_{x_2}$, then the one that intersects $L_{x_1}$ must appear to the left of the other on the ground line. Therefore, in both cases, the corresponding vertices in $H'$ cannot be adjacent, i.e., the line segments cannot intersect in the permutation graph representation of $H'$.
\end{proof}

We precompute a permutation representation of the permutation graph for every $L$-shape. A maximum clique in a permutation graph corresponding to an $L$-shape determines a maximum clique containing that $L$-shape. Since a maximum clique in a permutation graph can be obtained in  $O(n\log  \log n)$ time~\cite{YuTC93}, we can find the maximum clique containing an $L$-shape in the same time. 
 
Finally, we iterate the above process over all $L$-shapes to find the maximum clique. Let $S$ be the input $L$-shapes. It is straightforward to precompute the permutation representation of the permutation graph for every $L$ shape in $O(n^2)$ time in total by first  computing the sorted orders of the vertical segments and horizontal segments of the  $L$-shapes, and then for each segment $a$, computing the $L$-shapes that intersects $h$, $L_{x_1}$ and $L_{x_2}$ in sorted order.  Hence the time complexity for computing maximum clique is $O(n^2) + \sum_{q \in S} O(n\log \log n) \in O(n^2\log \log  n)$, where $S$ is the set of $L$-shapes in the input.

The following theorem summarizes the result of this section.

\begin{theorem}
Given a set of $n$ grounded two-sided square $L$-shapes, one can find the maximum clique of the corresponding intersection graph in $O(n^2\log \log  n)$ time.
\end{theorem}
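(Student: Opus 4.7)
My plan is to reduce the problem, for each candidate lowest $L$-shape of the clique, to a maximum clique query on a permutation graph, and then apply a known near-linear algorithm for that subproblem. The outer loop ranges over all $n$ choices for the lowest $L$-shape of the optimal clique, so if each query costs $O(n \log \log n)$ and precomputation is $O(n^2)$, the total cost matches the claimed bound.

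Fix a candidate $L$-shape $a$ intended to be the lowest one in the clique, and let $h$ be its horizontal segment. Any $L$-shape in a clique containing $a$ must cross $h$, so the candidate pool is the set $S_a$ of $L$-shapes that intersect $h$. The first step is a geometric claim: because the representation is \emph{square} and grounded, every $L$-shape in $S_a$ must additionally cross at least one of the two vertical lines $L_{x_1}, L_{x_2}$ through the endpoints of $h$. The intuition is that an $L$-shape whose horizontal arm is strictly contained in the $x$-range of $h$ has horizontal (and hence vertical) arm length at most $|h|$, so its vertical arm, starting above $h$, cannot reach the ground line; this contradicts groundedness. This observation partitions $S_a$ naturally into $L$-shapes crossing $L_{x_1}$ and those crossing $L_{x_2}$ (with possible overlap), which is the key structural feature I will exploit.

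The second step is to show that the intersection graph $H_a$ induced by $S_a$ is a permutation graph. I would define $\sigma$ as the left-to-right order of the $L$-shapes along $h$, and $\sigma'$ as the order in which they cross the top boundary of the rectangle $R$ above $h$ bounded by $L_{x_1}$ and $L_{x_2}$, traversed clockwise (i.e., starting up $L_{x_1}$, across the top, and down $L_{x_2}$). The claim is that two $L$-shapes of $S_a$ intersect in the original representation if and only if their positions in $\sigma$ and $\sigma'$ form an inversion. The main obstacle here is the case analysis: two $L$-shapes may both cross $L_{x_1}$, both cross $L_{x_2}$, or one of each, and in each case I must argue that intersection inside $R$ corresponds exactly to an inversion, while non-intersection (nesting or side-by-side) corresponds to agreement. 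The equal-length arm property will again be essential, since it rules out ``shifted'' configurations that could otherwise cross without producing an inversion.

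Once $H_a$ is identified as a permutation graph, I invoke the $O(n \log \log n)$ maximum clique algorithm of Yu, Tai and Chao on it. For the precomputation, I would first sort all horizontal and vertical segments of the input once, in $O(n^2)$ total time build, for every $a$, the lists of $L$-shapes intersecting $h$, $L_{x_1}$ and $L_{x_2}$ in sorted order; this is enough to output $\sigma$ and $\sigma'$ for each $a$ in $O(|S_a|) = O(n)$ time. Summing over all $n$ choices of $a$ gives $O(n^2) + n \cdot O(n \log \log n) = O(n^2 \log \log n)$, and taking the best clique over all candidates $a$ yields the maximum clique of $G$, since every nonempty clique has some lowest $L$-shape that appears as one of the candidates.
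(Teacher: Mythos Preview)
Your proposal is correct and follows essentially the same approach as the paper: fix the lowest $L$-shape $a$ of the clique, use the square property to show every $L$-shape crossing $h$ must hit one of the two vertical lines through the endpoints of $h$ (the paper's Lemma~\ref{lem:circ}), deduce that the induced intersection graph is a permutation graph via exactly the two orderings $\sigma,\sigma'$ you describe, and apply the $O(n\log\log n)$ algorithm of Yu--Tang--Chern after an $O(n^2)$ global preprocessing. The only cosmetic difference is that the paper phrases $\sigma'$ as the clockwise order of intersection points on the boundary of $R$ excluding $h$, rather than as a ``top boundary'' traversal, but your parenthetical clarification makes clear you mean the same thing.
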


\section{One-sided $L$-shapes}
\label{1sl}
In this section, we consider 1-sided  $L$-shapes, and give an $O(n^2\log \log n)$-time algorithm to compute a maximum clique. We assume the $L$-shapes are in general position.  We noticed that Chmel~\cite{thesis} has  independently discovered this result at the same time the conference version of this paper was accepted. Chmel mentioned an $O(n^2 \log n)$ time complexity.

\begin{figure}[ht]
\centering
\includegraphics[width=.5\textwidth]{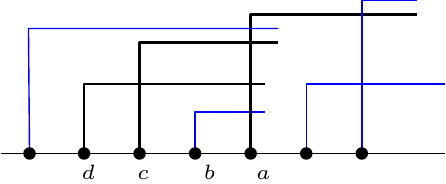}
\caption{Illustration for the dynamic programming for 1-sided $L$-shapes. The clique $Q$ is shown in black.}
\label{fig:ones}
\end{figure}


Let $S$ be the set of $L$-shapes  in the input and let $Q$ be a maximum clique. Let $a$ be the highest $L$-shape in $Q$. Then all the other $L$-shapes in $Q$ must intersect the vertical segment of $a$ (Figure~\ref{fig:ones}).  
There may also exist $L$-shapes (e.g., $b$) that do intersect the vertical segment of $a$ but does not belong to $Q$. 

Let $D(S)$ be the maximum clique in the intersection graph of the set $S$ of input $L$-shapes. For any $L$-shape $q\in S$, let $N(q)$ be the subset of $S$ such that every $L$-shape in $N(q)$ intersects the vertical segment of $q$. Then $D(S)$ can be defined as follows. 
\[ 
D(S)= \max_{q \in S} (D(N(q))+1)
\]

To compute the maximum clique efficiently, we do some preprocessing. We first compute the intersection graph $G$ of $S$ in $O(n^2)$ time. We then compute a sorted list $S_x$ consisting of the fixed endpoints (on the grounded line) of all the $L$-shapes. We next compute another sorted list $S_h$ of the heights of the $L$-shapes (Figure~\ref{fig:list}). Both these lists can be computed in $O(n \log n)$ time. Hence the total preprocessing takes $O(n^2)+O(n\log n) \in O(n^2)$ time. 

\begin{figure}[ht]
\centering
\includegraphics[width=.6\linewidth]{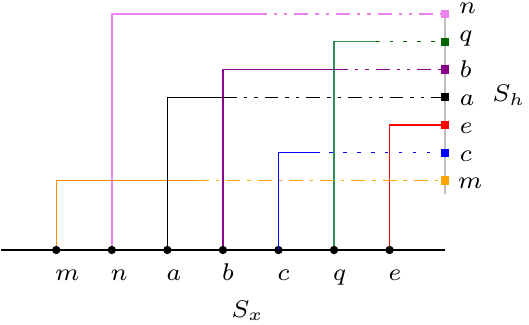}
\caption{ Illustration for $S_x$ and $S_h$.}
\label{fig:list}
\end{figure}

For each $q \in S$, we now compute the maximum clique in $N(q)$. First note that $N(q)$ corresponds to a permutation graph, where the edges are determined by the intersection of the $L$-shapes in $N(q)$.  We first find two ordered lists for the $L$-shapes of $N(q)$, one list corresponds to the order they appear on the ground line, and the other list corresponds to the order they appear on the  vertical segment of $q$. We  then  use an $O(n\log  \log n)$-time algorithm~\cite{YuTC93} for computing a maximum clique in a permutation graph to compute the maximum clique in $N(q)$. 

To list the $L$-shapes of $N(q)$ in the order they appear on the ground line, we scan $S_x$ from left to right and create a new ordered list $S'_x$ with the $L$-shapes that intersect the vertical segment of $q$. We then scan $S_h$ and create a new ordered list $S'_h$ that contains the $L$-shapes intersecting the vertical segment of $q$. Constructing $S'_x$	and $S'_h$ takes $O(n)$ time. Hence computing $N(q)$ and finding a maximum clique in $N(q)$ takes $O(n)+O(n\log \log n)$ time. Thus the total running time is
 \[ 
 \sum_{q \in S} O(n)+O(n\log \log n) \in O(n^2 \log \log n).
 \]

The following theorem summarizes the result of this section.
\begin{theorem}
Given a set of $n$ grounded 1-sided  $L$-shapes, one can find the maximum clique of the corresponding intersection graph in $O(n^2 \log \log n)$ time.
\end{theorem}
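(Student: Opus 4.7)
The plan is to reduce the problem to finding a maximum clique in a permutation graph and then invoke a known fast algorithm. First I would observe a structural property of any maximum clique $Q$: if $a$ is the topmost $L$-shape of $Q$, then every other member of $Q$ must cross the vertical segment of $a$. This follows from a quick case analysis in the one-sided setting: an $L$-shape $r$ whose bend lies strictly below that of $a$ and whose foot on the ground line sits to the right of $a$ cannot reach $a$ at all, while any such $r$ whose foot sits to the left of $a$ is forced to meet $a$'s vertical segment via its own horizontal segment. Consequently, it suffices to iterate over every $q\in S$ as a candidate topmost $L$-shape, compute a maximum clique inside the set $N(q)$ of $L$-shapes crossing the vertical segment of $q$, and return $1+\max_q |D(N(q))|$.

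Next I would prove the central geometric claim that the intersection graph restricted to $N(q)$ is a permutation graph. Assume without loss of generality that all $L$-shapes have the $\Gamma$ orientation. For two members $r,s \in N(q)$ with $x_r<x_s$, a short case analysis shows that they intersect iff $y_r<y_s$, where $y_r,y_s$ denote the heights at which they cross the vertical segment of $q$. Hence the left-to-right order of the $L$-shapes along the ground line and the bottom-to-top order of their crossings along the vertical segment of $q$ give two linear orderings whose agreement pattern encodes exactly the adjacencies of $N(q)$, which is precisely the definition of a permutation graph.

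With the reduction in hand, the algorithm is straightforward. I would precompute a sorted list $S_x$ of the $L$-shapes by ground-line $x$-coordinate and a sorted list $S_h$ by bend-height in $O(n\log n)$ total time. For each $q\in S$, I would scan $S_x$ and $S_h$ once to extract, in $O(n)$ time, the two ordered lists that describe $N(q)$ on the ground line and on the vertical segment of $q$. I would then invoke the $O(n\log\log n)$-time maximum-clique algorithm of Yu, Tsai, and Chen~\cite{YuTC93} on the resulting permutation graph. Summing over the $n$ choices of $q$ yields the claimed $O(n^2\log\log n)$ bound.

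The main obstacle is pinning down the permutation-graph claim: one must verify carefully that in the one-sided setting the simple dominance condition on $(x_r,y_r)$ captures every intersection among members of $N(q)$, and that this condition corresponds precisely to the two linear orders read off from the ground line and from the vertical segment of $q$. Once that geometric fact is established, the remaining algorithmic bookkeeping and the appeal to~\cite{YuTC93} are routine.
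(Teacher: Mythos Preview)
Your proposal is correct and follows essentially the same approach as the paper: fix the topmost $L$-shape $q$ of a clique, observe that every other clique member must cross the vertical segment of $q$, show that the intersection graph on $N(q)$ is a permutation graph via the ground-line order and the order along $q$'s vertical segment, and invoke the $O(n\log\log n)$ algorithm of~\cite{YuTC93} for each of the $n$ candidates. If anything, you supply more justification for the permutation-graph claim than the paper itself does.
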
 

\section{Separation Between  Graph Classes}


In this section we show that there exist graphs that can be represented with 2-sided square $L$-shapes, but not with grounded 1-sided $L$-shapes (Theorem~\ref{sep1}). Furthermore, there exist graphs that can be represented with grounded 2-sided  $L$-shapes but not with grounded 2-sided  square $L$-shapes (Theorem~\ref{sep2}). 


\begin{theorem}
\label{sep1}
There is a graph $G$ which admits a representation with  2-sided grounded square  $L$-shapes but not with 1-sided grounded $L$-shapes. 
\end{theorem}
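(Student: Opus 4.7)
The plan is to exhibit a specific graph $G$ that admits a 2-sided grounded square $L$-shape representation but no 1-sided grounded $L$-shape representation. A natural candidate for $G$ is a small graph whose intuitive representation demands $L$-shapes bending in both directions: for instance, a configuration in which two distinguished vertices, drawn as large square $L$-shapes opening toward each other from opposite ends of the ground line, must both intersect a common set of shorter $L$-shapes that remain mutually non-adjacent. The explicit 2-sided square representation is then straightforward to record by assigning coordinates to the bend and endpoint of each $L$-shape, and verifying edge by edge that every adjacency of $G$ is realized as a crossing and every non-adjacency by a pair of disjoint shapes.

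For the non-representability direction, by horizontal reflection symmetry it suffices to assume that every $L$-shape in a hypothetical 1-sided representation is of type $\mathsf{\Gamma}$. In such a representation, two shapes $L_i$ and $L_j$ with $p_i < p_j$ on the ground line intersect if and only if $h_i < h_j$ and the right extent $r_i$ reaches $p_j$. This rigid monotone structure implies, in particular, that whenever $L_i$ is adjacent to some $L_k$ far to its right, $L_i$ must also be adjacent to every $L_j$ with $p_i < p_j < p_k$ and $h_j > h_i$. I plan to exploit this staircase property to pinpoint an obstruction in $G$: a small induced substructure whose realization would simultaneously force a height inequality and its reverse, or an edge and an adjoining non-edge that cannot coexist.

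The main obstacle is ruling out every possible ordering of the vertices of $G$ on the ground line, rather than just a single one. To avoid a brute-force $|V(G)|!$ enumeration, the argument will proceed by fixing the position of a distinguished vertex in the hypothetical representation (for example, the unique vertex forced to be leftmost by its adjacency pattern, or a vertex of extremal height), and then showing that every choice of heights and extents for its neighbours, consistent with the required clique/independent-set structure of $G$, leads to a contradiction with the staircase constraint. An entirely analogous mirror-image argument handles the case of all anticlockwise turns, completing the proof.
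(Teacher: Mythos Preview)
Your plan diverges substantially from the paper's, and the divergence is where the gap lies. The paper does not invent a graph and prove both directions from scratch. Instead it takes the specific graph $G$ that Jel\'inek and T\"opfer~\cite{Jelinek019} already proved is representable by 2-sided grounded $L$-shapes but \emph{not} by 1-sided grounded $L$-shapes, and then shows only the remaining piece: that this particular $G$ (a 96-vertex graph built from four 4-vertex gadgets threaded onto an 80-cycle) also admits a 2-sided grounded \emph{square} $L$-shape representation. The entire non-representability direction is imported from the cited result; the paper's work is a concrete geometric construction, illustrated group by group.

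Your plan, by contrast, leaves the graph unspecified and pushes the hard direction (non-representability by 1-sided $L$-shapes) onto a case analysis you have not actually carried out. The informal candidate you describe --- two large shapes opening toward each other, each adjacent to a common independent set of short shapes --- essentially describes graphs like $K_{2,k}$ (possibly with an extra edge between the two hubs), and these \emph{are} 1-sided representable: put both hubs leftmost with long horizontal arms at two different low heights, and place the independent-set vertices to the right with tall verticals and short arms. So your intuitive candidate does not separate the classes. More generally, 1-sided grounded $L$-graphs are rich enough that the known separating example is not small; the staircase constraint you identify is correct but is far from sufficient on its own to exclude a given graph without a concrete target and a full case analysis. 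Until you fix a specific $G$ and actually execute the obstruction argument, the proposal does not constitute a proof --- whereas invoking~\cite{Jelinek019} and then exhibiting the square representation, as the paper does, closes both directions cleanly.
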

\begin{proof}
Jel\'inek and T\"{o}pfer~\cite{Jelinek019} showed that there exist a graph $G$ that can be represented using  grounded 2-sided $L$-shapes but not with grounded 1-sided  $L$-shapes.  It thus suffices to show that $G$ admits an intersection representation with grounded 2-sided square $L$-shapes. 

Figure~\ref{fig:gr}(a) illustrates an intersection graph representation of $G$ with grounded 2-sided $L$-shapes.  The $L$-shapes can be partitioned into five  groups as follows: \textit{Group 1:} Four $L$-shapes $x_1,x_2,x_3,x_4$ with edges ${(x_1,x_4), (x_1,x_2), (x_2,x_3)}$.  
 \textit{Group 2:} Four $L$-shapes $x_5,x_6,x_7,x_8$ with edges ${(x_5,x_8), (x_7,x_8), (x_6,x_7)}$.  
   \textit{Group 3:} Four $L$-shapes $x_{9},x_{10},x_{11},x_{12}$ similar to Group 1.  
    \textit{Group 4:} Four $L$-shapes $x_{13},x_{14},x_{15},x_{16}$ similar to Group 2.
     \textit{Group 5:} A cycle $y_1,\ldots,y_{80}$ with $x_i$ intersecting the $L$-shape corresponding to $y_{5i}$. 
     
The $L$-shapes corresponding to Group 1 and 3 are shown in blue. The $L$-shapes corresponding to Group 2 and 4 are shown in red. The $L$-shapes corresponding to the cycle are shown in black.  
 
Figure~\ref{fig:gr}(b) and (c) illustrate   a representation of Group 1 and Group 2 with square $L$-shapes and their associated $L$-shapes on the cycle. We can scale up and patch the representation of Group 2 with that of Group 1 such that the rightmost $L$-shape intersects the leftmost $L$-shape of Group 2. We can thus  extend the drawing to integrate the Groups 3 and 4. A schematic representation of this configuration is illustrated in Figure~\ref{fig:gr}(d).  The first and the last $L$-shapes on the cycle (i.e., $y_1$ and $y_2$) only intersect the leftmost and rightmost of these configuration. Hence it is straightforward to add them to the configuration completing the square $L$-shape representation of $G$.
\end{proof}
\begin{figure}[hpt]
    \centering
    \includegraphics[width=.8\linewidth]{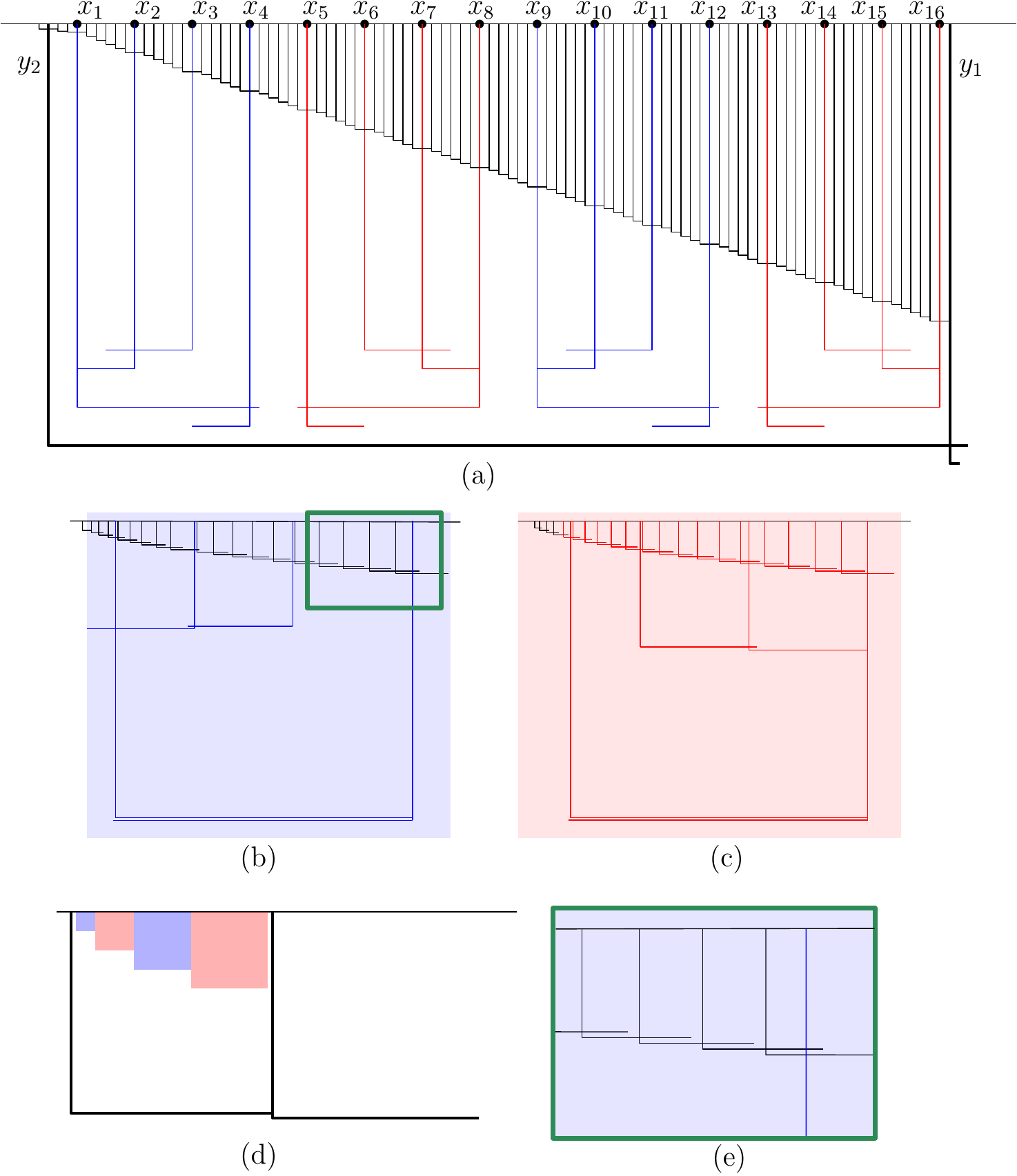}
    \caption{(a) An intersection graph $G$ of grounded 2-sided $L$-shapes that cannot be represented with 1-sided $L$-shapes. (b)--(c) Illustration for an intersection representation of $G$ with grounded 2-sided square $L$-shapes. (e) A zoomed in view of the $L$-shapes of figure (b). }
    \label{fig:gr}
\end{figure}

\begin{theorem}
\label{sep2}
There exists a graph $G$ that admits a representation with  2-sided grounded $L$-shapes but not with 2-sided grounded square $L$-shapes. 
\end{theorem}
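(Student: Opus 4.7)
The plan is to leverage Lemma~\ref{lem:circ} and its immediate permutation-graph consequence from Section~\ref{2ssl}: in every 2-sided grounded square $L$-shape representation, for every $L$-shape $a$ with horizontal segment $h_a$, the $L$-shapes that meet $h_a$ induce a permutation graph. Since $C_5$ is an odd hole and permutation graphs are perfect, $C_5$ is not a permutation graph; hence any configuration that forces $h_a$ to be crossed by five $L$-shapes inducing a $C_5$ is impossible in the square setting, while in the non-square setting the freedom to choose the aspect ratio of $a$ makes such a configuration easy to realize.

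Concretely, I would take $G$ to consist of a distinguished vertex $a$ adjacent to five vertices $b_1,\ldots,b_5$ that induce a $C_5$, wrapped inside a rigidity gadget built along the same lines as in Theorem~\ref{sep1} and Jel\'inek and T\"{o}pfer~\cite{Jelinek019} (for instance, a sufficiently long induced cycle together with attached forced-clique groups). The purpose of the gadget is to guarantee that in every 2-sided grounded $L$-shape representation of $G$, the vertex $a$ is drawn as the ``low and wide'' $L$-shape that covers the other core vertices, and each $b_i$ is forced to cross $h_a$ rather than the vertical segment of $a$ or of some other shape.

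For the forward direction, a 2-sided grounded (non-square) representation of $G$ is obtained by drawing $a$ as a low $L$-shape with a very long horizontal segment, drawing each $b_i$ as a narrow $L$-shape whose vertical segment pierces $h_a$ while its short horizontal segment stays inside the horizontal span of $h_a$, and using independent choices of bend heights and horizontal directions for the $b_i$ to realize exactly the five adjacencies of $C_5$; the gadget can then be placed around this core without interfering with the prescribed intersections. For the square direction, suppose for contradiction that $G$ has a 2-sided grounded square $L$-shape representation. The rigidity gadget forces $a$ into the same geometric role, so each $b_i$ crosses $h_a$. Applying Lemma~\ref{lem:circ} together with the permutation-graph argument in Section~\ref{2ssl}, the induced subgraph $G[\{b_1,\ldots,b_5\}]$ must be a permutation graph, contradicting the fact that $C_5$ is not.

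The main obstacle is the rigidity gadget: we must rule out, in every square representation, any alternative role assignment in which some vertex other than $a$ is drawn as the lowest/widest $L$-shape, or in which some $b_i$ meets $a$ only through the vertical segment of $a$. In general $L$-shape representations there is considerable flexibility in which vertex occupies the ``lowest'' position, and the gadget must rely on enough combinatorial invariants (a long enough induced cycle plus carefully attached groups of forced cliques, analogous to the construction used in Theorem~\ref{sep1}) to pin down the role of $a$ uniquely. Given such a gadget, both the 2-sided realisability and the square-case contradiction are comparatively routine; the technical work lies in verifying that the gadget simultaneously admits the non-square representation and rigidly forbids every alternative role assignment in the square setting.
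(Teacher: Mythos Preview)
Your core graph---a single vertex adjacent to all of a $5$-cycle---is exactly the $6$-vertex wheel, which is precisely the graph the paper uses. The difference lies entirely in how the square-case impossibility is argued.

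The paper does not invoke Lemma~\ref{lem:circ} or the permutation-graph consequence here at all. It fixes the outer $5$-cycle and performs a short exhaustive case analysis on which two cycle vertices are the leftmost and rightmost $L$-shapes on the ground line (splitting on whether they are adjacent on the cycle), deduces that up to symmetry only a handful of ground-line orderings are possible, and in each of them checks directly that the hub vertex cannot be inserted as a square $L$-shape. No rigidity gadget is needed, and the forward direction is just an explicit drawing of the wheel with non-square $2$-sided $L$-shapes.

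Your route via Lemma~\ref{lem:circ} is conceptually appealing, but the gap you yourself flag is genuine and is in fact the whole difficulty. Nothing in the bare wheel forces the hub to be the lowest $L$-shape; if some cycle vertex is lowest instead, its neighbourhood induces only a $P_3$, which \emph{is} a permutation graph, and the argument yields nothing. The construction you point to from Theorem~\ref{sep1} and~\cite{Jelinek019} was designed to enforce $2$-sidedness, not to control relative heights; adapting it so that it simultaneously (i) forces the hub to be lowest in \emph{every} square representation and (ii) keeps the enlarged graph representable by $2$-sided grounded $L$-shapes is a substantial construction that you have not carried out, and it is not clear it can be done. As written, the proposal is a plan whose main lemma is missing, while the paper disposes of the bare wheel by direct case analysis in about a page.
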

\begin{proof}
We will use the wheel graph $G$ of 6 vertices to prove this theorem  (Figure~\ref{fig:case1}(a)). 
 Suppose for a contradiction that $G$ admits an intersection representation $R$ with grounded 2-sided square $L$-shapes. Consider now a representation $R'$ of the 5-cycle $C = (a,b,c,d,e)$ in $R$. We now examine the first and last $L$-shapes of $R'$, and examine two cases depending on whether they are adjacent on  the 5-cycle.\smallskip

\noindent\textbf{Case 1 (The first and last $L$-shapes in $R'$ are adjacent on $C$):} 
Without loss of generality assume that $a$ and $e$ are the leftmost and rightmost $L$-shapes, and $a$ is taller than $e$. Since $b$ is adjacent to $a$, it must be of type  \testletter{L}.
Since $c$ is adjacent to $b$, $c$ can be grounded to the right or left of the $b$. We thus consider two subcases depending on $c$’s position. 

\textit{Case 1.1 ($c$ is lies to the  right of $b$):} Since $d$ needs to intersect $c$ and $e$,  $d$ must lie to the left of $c$ on the ground line and be of type $\mathsf{\Gamma}$. Since $d$ must not intersect $b$ it needs to lie to the right of $b$. Hence $a$, $b$, $c$, $d$ and $e$ have unique x-coordinate order on the ground line.

\begin{figure}[pt]
    \centering
    \includegraphics[width=\linewidth]{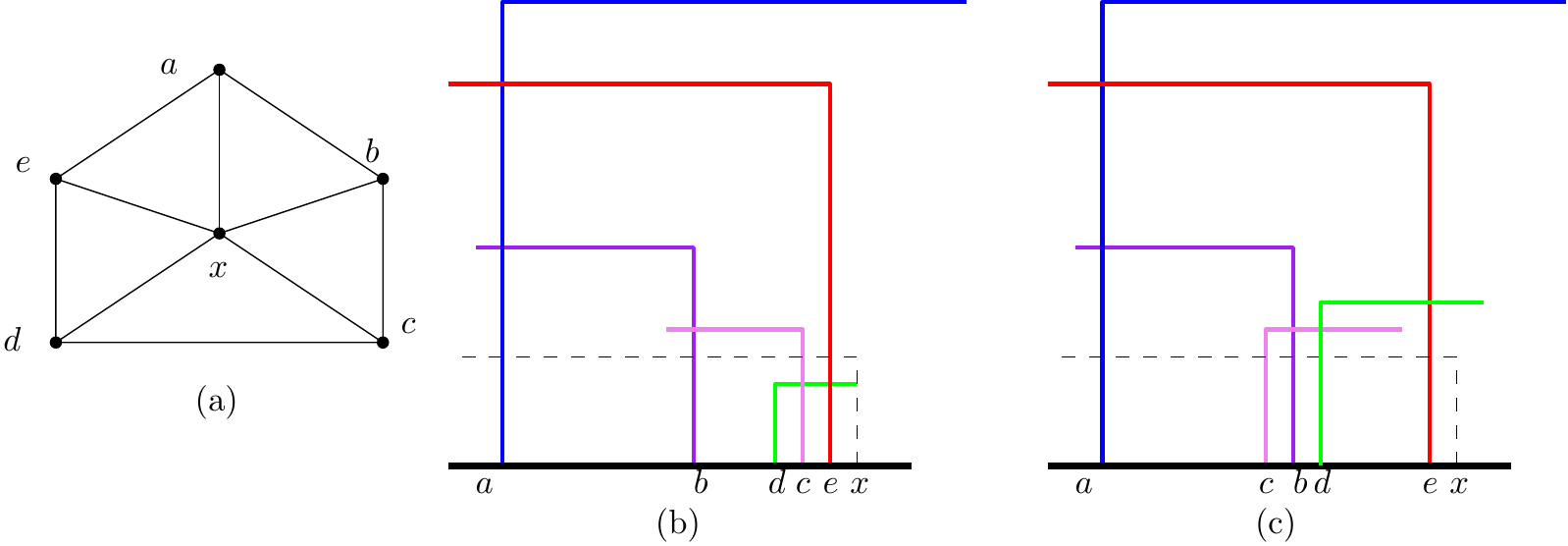}
    \caption{(a) A wheel graph. Illustration for  (b)  Case 1.1 (c) and Case 1.2.}
    \label{fig:case1}
\end{figure}
We now need to add the vertex $x$. 
If we put $x$ between $a$ and $b$, then to intersect both $a$ and $b$, it must be higher than $b$. Hence $x$ would fail to intersect $d$.  
If we put $x$ between $b$ and $e$, then $x$ cannot intersect both $b$ and $e$ simultaneously. 
Therefore, either $x$ lies to the right of $e$ or to the left of $a$. 

If $x$ lies to the left of $a$, then to intersect $d$, it has to be the lowest $L$-shape. Since the height of $c$ is larger than $d$ and since $c$ does not intersect $a$, $x$ will fail to intersect $c$. Consider now the case when $x$ lies to the right of $e$, as illustrated in Figure~\ref{fig:case1}(b). In this case the height of $x$ must be smaller than that of $c$. Since $c$ does not intersect $a$, in the square representation $x$ will fail to intersect $a$.


\textit{Case 1.2 ($c$ is lies to the  left of $b$):} Similar to Case 1.2, this case also results into a unique x-coordinate ordering for the $L$-shapes on the 5-cycle.   Figure~\ref{fig:case1}(c) illustrates this scenario. If $x$ lies to the left of $c$, then the height of $x$ must be smaller than $c$. Since $c$ does not intersect $e$, $x$ will also fail to intersect $e$. If $x$ lies between $c$ and $d$, then to intersect $b$ and $d$, the height of $x$ must be smaller than that of $b$ and $d$. Hence $x$ will not be able to intersect $a$ and $e$ simultaneously. If $x$ lies to the right of $d$, then to intersect both $b$ and $e$, $x$ must lie also to the right of $e$. In this case the height of $x$ must be smaller than $c$. Hence $x$ will fail to intersect $c$.
 

\noindent\textbf{Case 2 (The first and last $L$-shapes  in $R'$ are not adjacent on $C$):} 
Without loss of generality assume that $a$ and $d$ are the first and last $L$-shapes,  and $a$ is taller than $d$. Here $b$ must lie to the right of $a$ and be of type   \testletter{L}. We distinguish two cases based on whether $c$ lies to the left or right of $b$.

\textit{Case 2.1 ($c$ is lies to the  left of $b$):} In this case we can find a unique  x-coordinate ordering of the $L$ shapes on the 5-cycle, as shown in Figure~\ref{fig:case21}(a). However, since $d$ must intersect $c$, the height of $d$ must be smaller than that of $c$. Furthermore, since $c$ does not intersect $e$, $c$ must be of type \testletter{L}. Since $b$ is intersecting $a$, $c$ must intersect $a$, which contradicts that $a$ and $c$ are non adjacent on the 5-cycle.




\textit{Case 2.2 ($c$ is lies to the  right of $b$):} In this case, we obtain a unique x-coordinate ordering for the $L$-shapes on the 5-cycle.   Figure~\ref{fig:case21}(b) illustrates this scenario. If $x$ lies to the left of $b$, then $x$ cannot intersect $a$ and $d$ simultaneously. If $x$ lies between $b$ and $e$, then to intersect $b$, the height of $x$ must be smaller than that of $b$. Hence $x$ will not be able to intersect $b$ and $e$ simultaneously. If $x$ lies to the right of $e$, then to intersect   $c$,   the height of $x$ must be smaller than $c$. Since $c$ does not intersect $a$,  $x$ will fail to intersect $a$.
  \begin{figure}[pt]
    \centering
    \includegraphics[width=.6\linewidth]{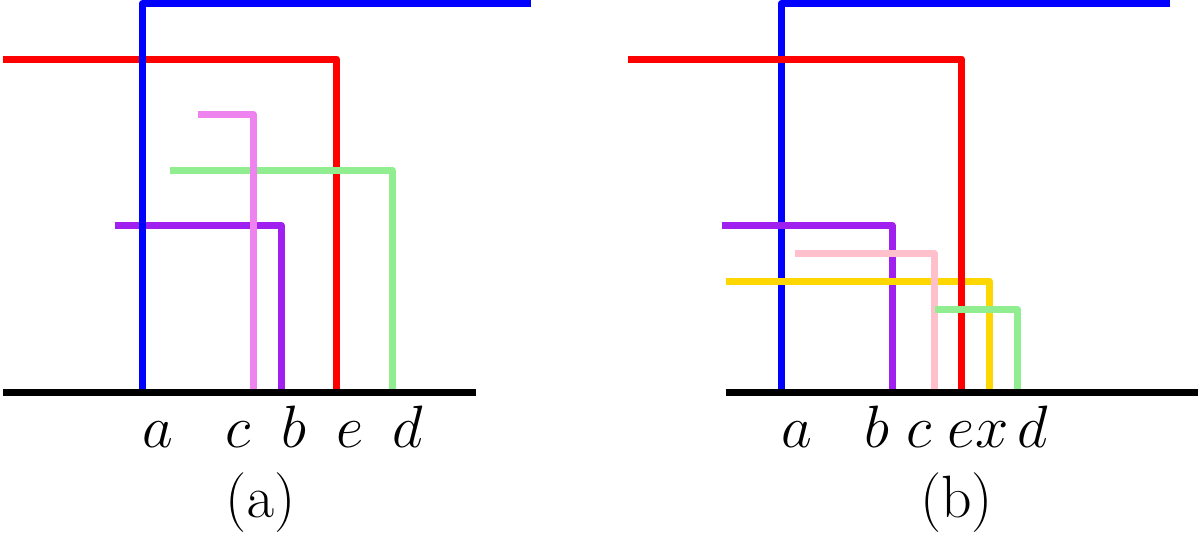}
    \caption{Illustration for (a) Case 2.1, and (b) Case 2.2. }
    \label{fig:case21}
\end{figure}
\end{proof}

\section{Conclusion}

In this paper we have examined the maximum clique problem for the grounded 1-bend string graphs. We show the problem to be hard for $y$-monotone strings. We also show that the problem  remains  hard  when we relax monotonicity, but restrict the bends and free endpoints of the  strings to  lie on three horizontal  lines. 

The most intriguing open problem is to settle the time complexity for grounded segment graphs. However, there are various questions worth investigating for 1-bend strings where the bends and endpoints lie on a few lines. If we allow only one horizontal line, then the resulting strings become segments and the corresponding intersection graph is a permutation graph, where one can find a maximum clique in polynomial time. For a fixed number of lines the problem is polynomial-time solvable for segments~\cite{MiddendorfP92}. Therefore, it would be interesting to examine whether the problem becomes polynomial-time solvable for two horizontal lines, where we can explore 1-bend strings. We think such restriction on the number of lines would be non-trivial even for strictly $y$-monotone  1-bend strings.

We have developed polynomial-time algorithms to find maximum clique for various types of $L$-shapes. A natural question is whether the running times of these algorithms can be improved. It would also be interesting to find non-trivial lower bounds on the time complexity.

\section*{Acknowledgements}

The research of Debajyoti Mondal is supported in part by the Natural Sciences and Engineering Research Council of Canada (NSERC).

\clearpage

\bibliographystyle{abbrvurl}
\bibliography{example-biblio}

\end{document}